\documentclass[12pt]{article}%
\usepackage{amsfonts}
\usepackage{amsmath}
\usepackage{amssymb}
\usepackage{graphicx}%
\setcounter{MaxMatrixCols}{30}
\newtheorem{theorem}{Theorem}
\newtheorem{acknowledgement}[theorem]{Acknowledgement}

\newtheorem{corollary}[theorem]{Corollary}

\newtheorem{lemma}[theorem]{Lemma}

\newtheorem{proposition}[theorem]{Proposition}
\newtheorem{remark}[theorem]{Remark}

\newenvironment{proof}[1][Proof]{\noindent\textbf{#1.} }{\ \rule{0.5em}{0.5em}}
\begin{document}

\title{A new approach to the $\star$-genvalue equation}
\author{Maurice de Gosson\thanks{This author has been financed by the Austrian
Research Agency FWF (Projekt \textquotedblleft Symplectic Geometry and
Applications to TFA and QM\textquotedblright, Projektnummer P20442-N13). } \ ,
\ Franz Luef\thanks{This author has been supported by the European Union
EUCETIFA grant MEXT-CT-2004-517154.}\\\textit{Universit\"{a}t Wien}\\\textit{Fakult\"{a}t f\"{u}r Mathematik, NuHAG }\\\textit{Nordbergstrasse 15, AT-1090 Wien}}
\maketitle

\begin{abstract}
We show that the eigenvalues and eigenfunctions of the stargenvalue equation
can be completely expressed in terms of the corresponding eigenvalue problem
for the quantum Hamiltonian. Our method makes use of a Weyl-type
representation of the star-product and of the properties of the cross-Wigner
transform, which appears as an intertwining operator.

\end{abstract}

\textbf{MSC\ (2000)}: 47G30, 81S10

\textbf{Keywords:} Moyal product, stargenvalue equation, Wigner transform,
Weyl operator

\section{Introduction and Motivation}

One of the key equations in the deformation quantization theory of Bayen et
al. \cite{BFFLS1,BFFLS2} is, no doubt, the stargenvalue (for short $\star
$-genvalue) equation $H\star\Psi=E\Psi$ where $\star$ is the Moyal--Groenewold
\textquotedblleft star-product\textquotedblright\cite{BFFLS1,BFFLS2,groen}. In
this Letter we show that the $\star$-genvalue equation can be completely
solved in terms of the usual eigenvalue/eigenfunction problem $\widehat{H}%
\psi=E\psi$ where $\widehat{H}$ is the Weyl operator with symbol $H$ (and vice
versa). The underlying idea is simple: we first rewrite the equation
$H\star\Psi=E\Psi$ in the form
\[
H(x+\tfrac{1}{2}i\hbar\partial_{p},p-\tfrac{1}{2}i\hbar\partial_{x}%
)\Psi(x,p)=E\Psi(x,p)\text{,}%
\]
where $H(x+\tfrac{1}{2}i\hbar\partial_{p},p-\tfrac{1}{2}i\hbar\partial_{x})$
is the Weyl operator with symbol
\[
\mathbb{H}(z,\zeta)=H(x-\tfrac{1}{2}\zeta_{p},p+\tfrac{1}{2}\zeta_{x}).
\]
We next show that the solutions of this equation and those of $\widehat{H}%
\psi=E\psi$ can be obtained from each other using a family of intertwining
operators (which is countable when $\widehat{H}$ is essentially self-adjoin);
these operators are up to a normalization factor, the cross-Wigner transforms
$\psi\longmapsto W(\psi,\phi)$ where $\phi$ describes the set of
eigenfunctions of $\widehat{H}$. Our approach is inspired by previous work
\cite{jphysa} of one of us on the time-dependent Torres-Vega \cite{TV}
Schr\"{o}dinger equation in phase space.

\subsubsection*{Notation}

We will write $z=(x,p)$ where $x\in\mathbb{R}^{n}$ and $p\in(\mathbb{R}%
^{n})^{\ast}$. Operators $\mathcal{S}(\mathbb{R}^{n})\longrightarrow$
$\mathcal{S}^{\prime}(\mathbb{R}^{n})$ are usually denoted by $\widehat
{A},\widehat{B},...$ while operators $\mathcal{S}(\mathbb{R}^{2n}%
)\longrightarrow\mathcal{S}^{\prime}(\mathbb{R}^{2n})$ are denoted by
$\widetilde{A},\widetilde{B},...$ The Greek letters $\psi,\phi,...$ stand for
functions defined on $\mathbb{R}^{n}$ while their capitalized counterparts
$\Psi,\Phi,...$ denote functions defined on $\mathbb{R}^{2n}$. We will make
use of the symplectic Fourier transform defined for $\Psi\in\mathcal{S}%
(\mathbb{R}^{2n})$ by the formula%
\[
\Psi_{\sigma}^{\hbar}(z)=F_{\sigma}^{\hbar}\Psi(z)=\left(  \tfrac{1}{2\pi
\hbar}\right)  ^{n}\int_{\mathbb{R}^{2n}}e^{-\frac{i}{\hbar}\sigma
(z,z^{\prime})}\Psi(z^{\prime})dz^{\prime}%
\]
where $\sigma(z,z^{\prime})=p\cdot x^{\prime}-p^{\prime}\cdot x$ is the
standard symplectic form on $\mathbb{R}^{n}\times(\mathbb{R}^{n})^{\ast}%
\equiv\mathbb{R}^{2n}$ (the dot $\cdot$ stands for the duality bracket; in
practice $p\cdot x$ can be seen as the usual Euclidean scalar product under
the identification $(\mathbb{R}^{n})^{\ast}\equiv\mathbb{R}^{n}$). The
symplectic Fourier transform is involutive: $F_{\sigma}^{\hbar}\circ
F_{\sigma}^{\hbar}$ is the identity on $\mathcal{S}^{\prime}(\mathbb{R}^{2n})$.

\section{Stargenvalue Equation:\ Short Review}

In view of Schwartz's kernel theorem every linear continuous operator
$\widehat{A}:\mathcal{S}(\mathbb{R}^{n})\longrightarrow\mathcal{S}^{\prime
}(\mathbb{R}^{n})$ can be represented, for $\psi\in\mathcal{S}(\mathbb{R}%
^{n})$, in the form $\widehat{A}\psi(x)=\left\langle \mathcal{K}_{A}%
(x,\cdot),\psi\right\rangle $ with $\mathcal{K}_{A}\in\mathcal{S}^{\prime
}(\mathbb{R}^{n}\times\mathbb{R}^{n})$. By definition the contravariant (Weyl)
symbol of $\widehat{A}$ is the tempered distribution $A$ defined by the
Fourier transform%
\begin{equation}
a(x,p)=\left\langle e^{-\frac{i}{\hbar}p(\cdot)},\mathcal{K}_{\widehat{A}%
}(x+\tfrac{1}{2}(\cdot),x-\tfrac{1}{2}(\cdot)\right\rangle . \label{axy}%
\end{equation}
Assume that $\widehat{B}:\mathcal{S}(\mathbb{R}^{n})\longrightarrow
\mathcal{S}(\mathbb{R}^{n})$; then the product $\widehat{C}=\widehat{A}%
\circ\widehat{B}$ exists and its Weyl symbol is given by the Moyal product
\begin{equation}
a\star b(z)=\left(  \tfrac{1}{4\pi\hbar}\right)  ^{2n}\iint
\nolimits_{\mathbb{R}^{n}\times\mathbb{R}^{n}}e^{\frac{i}{2\hbar}\sigma
(u,v)}a(z+\tfrac{1}{2}u)b(z-\tfrac{1}{2}v)dudv. \label{cz}%
\end{equation}
The main observation we will exploit in this paper is the following: if we
write $a=H$ and $b=\Psi$ then we can write%
\begin{equation}
H\star\Psi=\widetilde{H}\Psi\label{hstarpsi},%
\end{equation}
where%
\[
\widetilde{H}=H(x+\tfrac{1}{2}i\hbar\partial_{p},p-\tfrac{1}{2}i\hbar
\partial_{x})
\]
is a certain pseudodifferential operator on $\mathcal{S}(\mathbb{R}^{2n})$ we
are going to identify. Let us view the linear operator $\widetilde{H}%
:\Psi\longmapsto H\star\Psi$ on $\mathcal{S}(\mathbb{R}^{2n})$ as a Weyl
operator. Using formula (\ref{cz}), the kernel of $\widetilde{H}$ is the
distribution%
\begin{equation}
\mathcal{K}_{\widetilde{H}}(z,y)=\left(  \tfrac{1}{2\pi\hbar}\right)
^{2n}\int_{\mathbb{R}^{2n}}e^{\frac{i}{\hbar}\sigma(u,z-y)}H(z-\tfrac{1}%
{2}u)du \label{k},%
\end{equation}
hence using (\ref{axy}) and the Fourier inversion formula the contravariant
symbol of $\widetilde{H}$ is%
\[
\mathbb{H}(z,\zeta)=\int_{\mathbb{R}^{2n}}e^{\frac{i}{\hbar}\zeta\cdot\eta
}\mathcal{K}_{\widetilde{H}}(z+\tfrac{1}{2}\eta,z-\tfrac{1}{2}\eta)d\eta.
\]
Using (\ref{k}) and performing the change of variables $u=2z+\eta-z^{\prime}$
we get
\[
\mathcal{K}_{\widetilde{H}}(z+\tfrac{1}{2}\eta,z-\tfrac{1}{2}\eta)=\left(
\tfrac{1}{2\pi\hbar}\right)  ^{2n}e^{\frac{2i}{\hbar}\sigma(z,\eta)}%
\int_{\mathbb{R}^{2n}}e^{\frac{i}{\hbar}\sigma(\eta,z^{\prime})}H(\tfrac{1}%
{2}z^{\prime})dz^{\prime};
\]
setting $H(\tfrac{1}{2}z^{\prime})=H_{1/2}(z^{\prime})$ the integral is
$\left(  2\pi\hbar\right)  ^{n}$ times the symplectic Fourier transform
$F_{\sigma}^{\hbar}H_{1/2}(-\eta)=(H_{1/2})_{\sigma}(-\eta)$ so that%
\begin{align*}
\mathbb{H}(\tfrac{1}{2}z,\zeta)  &  =\left(  \tfrac{1}{2\pi\hbar}\right)
^{n}\int_{\mathbb{R}^{2n}}e^{\frac{i}{\hbar}\zeta\cdot\eta}e^{\frac{i}{\hbar
}\sigma(z,\eta)}(H_{1/2})_{\sigma}(-\eta)d\eta\\
&  =\left(  \tfrac{1}{2\pi\hbar}\right)  ^{n}\int_{\mathbb{R}^{2n}}%
e^{-\frac{i}{\hbar}\sigma(z-J\zeta,\eta)}(H_{1/2})_{\sigma}(\eta)d\eta
\end{align*}
where $J=%
\begin{pmatrix}
0 & I\\
-I & 0
\end{pmatrix}
$ is the standard symplectic matrix. Since the second equality is the inverse
symplectic Fourier transform of $(H_{1/2})_{\sigma}$ calculated at the point
$z+J\zeta$. We finally get%
\begin{equation}
\mathbb{H}(z,\zeta)=H(x-\tfrac{1}{2}\zeta_{p},p+\tfrac{1}{2}\zeta_{x})
\label{hsymb}%
\end{equation}
where we are viewing $\zeta=(\zeta_{x},\zeta_{p})$ as the dual variable of
$z=(x,p)$; this justifies formula (\ref{hstarpsi}) viewing $\widetilde{H}$ as
the quantized Hamiltonian obtained from $\mathbb{H}$ by the quantum rule%
\begin{equation}
(x,p)\longmapsto(x+\tfrac{1}{2}i\hbar\partial_{p},p-\tfrac{1}{2}i\hbar
\partial_{x}). \label{qr}%
\end{equation}

\section{$\widetilde{H}$-Calculus}

There is another very fruitful way of interpreting the Weyl operators
$\widetilde{H}=H\star\Psi$. Let us return to the expression (\ref{cz}) with
$a=H$ and $b=\Psi;$ performing the changes of variable $u=2(z^{\prime}-z)$ and
$v=z_{0}$ this formula can be rewritten as%
\[
\widetilde{H}\Psi(z)=\left(  \tfrac{1}{2\pi\hbar}\right)  ^{2n}\int
_{\mathbb{R}^{2n}}\left[  \int_{\mathbb{R}^{2n}}e^{-\frac{i}{\hbar}%
\sigma(z_{0},z^{\prime})}H(z^{\prime})dz^{\prime}\right]  e^{-\frac{i}{\hbar
}\sigma(z,z_{0})}\Psi(z-\tfrac{1}{2}z_{0})dz_{0}.
\]
Observing that the integral between brackets is $(2\pi\hbar)^{n}$ times the
symplectic Fourier transform of $H$ we can write this formula in the form%
\begin{equation}
\widetilde{H}\Psi(z)=\left(  \tfrac{1}{2\pi\hbar}\right)  ^{n}\int
_{\mathbb{R}^{2n}}H_{\sigma}^{\hbar}(z_{0})\widetilde{T}(z_{0})\Psi(z)dz_{0}
\label{psi1}%
\end{equation}
where $\widetilde{T}(z_{0})$ is the operator defined by
\begin{equation}
\widetilde{T}(z_{0})\Psi(z)=e^{-\frac{i}{\hbar}\sigma(z,z_{0})}\Psi
(z-\tfrac{1}{2}z_{0}). \label{hwbis}%
\end{equation}
Formula (\ref{psi1}) is strongly reminiscent of the representation
\begin{equation}
\widehat{H}\psi=\left(  \tfrac{1}{2\pi\hbar}\right)  ^{n}\int_{\mathbb{R}%
^{2n}}H_{\sigma}^{\hbar}(z_{0})\widehat{T}(z_{0})\psi dz_{0} \label{weyl}%
\end{equation}
of a Weyl operator $\widehat{H}$ in terms of its covariant symbol $H_{\sigma
}^{\hbar}=F_{\sigma}^{\hbar}H$ and the Heisenberg--Weyl operator%
\[
\widehat{T}(z_{0})\psi(x)=e^{\frac{i}{\hbar}(p_{0}\cdot x-\frac{1}{2}%
p_{0}\cdot x_{0})}\psi(x-x_{0})\text{,}%
\]
except that $\widetilde{T}(z_{0})$ is allowed to act on functions of $z$ and
not only of $x$. This feeling is amplified when one notes after a
straightforward calculation that the operators $\widetilde{T}(z_{0})$ obey the
relations%
\begin{align}
\widetilde{T}(z_{0}+z_{1})  &  =e^{-\frac{i}{2\hbar}\sigma(z_{0},z_{1}%
)}\widetilde{T}(z_{0})\widetilde{T}(z_{1})\label{a}\\
\widetilde{T}(z_{1})\widetilde{T}(z_{0})  &  =e^{-\frac{i}{\hbar}\sigma
(z_{0},z_{1})}\widetilde{T}(z_{0})\widetilde{T}(z_{1}) \label{b}%
\end{align}
which are similar to those satisfied by the Heisenberg--Weyl operators. These
facts suggest that $\widetilde{T}(z_{0},t)=e^{\frac{it}{\hbar}}\widetilde
{T}(z_{0})$ defines a unitary representation of the Heisenberg group. Let us
prove this is indeed the case. For this we will need the linear mapping
$W_{\phi}:\mathcal{S}(\mathbb{R}^{n})\longrightarrow S(\mathbb{R}^{2n})$
defined by
\begin{equation}
W_{\phi}\psi=(2\pi\hbar)^{n/2}W(\psi,\phi) \label{wpf}%
\end{equation}
where $\phi$ denotes an arbitrary function in $\mathcal{S}(\mathbb{R}^{n})$
such that $||\phi||_{L^{2}}=1$. $W(\psi,\phi)$ is the cross-Wigner
distribution; we thus have explicitly%
\begin{equation}
W_{\phi}\psi(z)=\left(  \tfrac{1}{2\pi\hbar}\right)  ^{n/2}\int_{\mathbb{R}%
^{n}}e^{-\frac{i}{\hbar}p\cdot y}\psi(x+\tfrac{1}{2}y)\overline{\phi
(x-\tfrac{1}{2}y)}dy. \label{wifi}%
\end{equation}
In view of Moyal's identity%
\[
(W(\psi,\phi)|W(\psi^{\prime},\phi^{\prime}))_{L^{2}(\mathbb{R}^{2n})}=\left(
\tfrac{1}{2\pi\hbar}\right)  ^{n}(\psi|\psi^{\prime})_{L^{2}(\mathbb{R}^{n}%
)}(\phi|\phi^{\prime})_{L^{2}(\mathbb{R}^{n})}%
\]
the operator $W_{\phi}$ extends into an isometry of $L^{2}(\mathbb{R}^{n})$
onto a subspace $\mathcal{H}_{\phi}$ of $L^{2}(\mathbb{R}^{2n})$; we are going
to see in a moment $\mathcal{H}_{\phi}$ is closed in $L^{2}(\mathbb{R}^{2n})$,
but let us first give a formula for the adjoint $W_{\phi}^{\ast}$ of $W_{\phi
}$. We have%
\begin{equation}
W_{\phi}^{\ast}\Psi(z)=\left(  \tfrac{2}{\pi\hbar}\right)  ^{n/2}%
\int_{\mathbb{R}^{n}}e^{\frac{2i}{\hbar}p\cdot(x-y)}\phi(2y-x)\Psi(y,p)dpdy
\label{adj}%
\end{equation}
(it follows from a straightforward calculation using the identity $(W_{\phi
}\psi|\Psi)_{L^{2}(\mathbb{R}^{2n})}=(\psi|W_{\phi}^{\ast}\Psi)_{L^{2}%
(\mathbb{R}^{n})}$).

\begin{proposition}
The range $\mathcal{H}_{\phi}$ of $W_{\phi}$ is closed, and hence a Hilbert space.
\end{proposition}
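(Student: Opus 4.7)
The plan is to exploit the isometry property of $W_{\phi}$: by Moyal's identity (applied with $\psi'=\psi$, $\phi'=\phi$ and $\|\phi\|_{L^2}=1$), one has $\|W_{\phi}\psi\|_{L^{2}(\mathbb{R}^{2n})} = \|\psi\|_{L^2(\mathbb{R}^n)}$ for every $\psi\in\mathcal{S}(\mathbb{R}^n)$, so the normalization by $(2\pi\hbar)^{n/2}$ in (\ref{wpf}) is exactly what makes $W_{\phi}$ extend to a linear isometry $L^{2}(\mathbb{R}^{n})\to L^{2}(\mathbb{R}^{2n})$. Closedness of the range of an isometry between Hilbert spaces is then a completely standard fact, and this is how I would conclude.

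Concretely, I would take an arbitrary sequence $\Psi_{k}=W_{\phi}\psi_{k}$ in $\mathcal{H}_{\phi}$ converging to some $\Psi\in L^{2}(\mathbb{R}^{2n})$. Since $(\Psi_k)$ is Cauchy and $W_\phi$ is an isometry,
\[
\|\psi_{k}-\psi_{\ell}\|_{L^{2}(\mathbb{R}^{n})}=\|W_{\phi}\psi_{k}-W_{\phi}\psi_{\ell}\|_{L^{2}(\mathbb{R}^{2n})}\longrightarrow 0,
\]
so $(\psi_{k})$ is Cauchy in $L^{2}(\mathbb{R}^{n})$ and converges to some $\psi$. By continuity (isometry) of $W_{\phi}$ we get $W_{\phi}\psi_{k}\to W_{\phi}\psi$, hence $\Psi=W_{\phi}\psi\in\mathcal{H}_{\phi}$. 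Thus $\mathcal{H}_{\phi}$ is closed and inherits a Hilbert space structure from $L^{2}(\mathbb{R}^{2n})$.

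A slicker alternative, which also prepares the ground for later sections where the adjoint formula (\ref{adj}) will be needed, is to observe that the isometry relation is equivalent to $W_{\phi}^{\ast}W_{\phi}=I_{L^{2}(\mathbb{R}^{n})}$; then $P_{\phi}:=W_{\phi}W_{\phi}^{\ast}$ satisfies $P_{\phi}^{2}=W_{\phi}(W_{\phi}^{\ast}W_{\phi})W_{\phi}^{\ast}=P_{\phi}$ and $P_{\phi}^{\ast}=P_{\phi}$, so $P_{\phi}$ is an orthogonal projection whose range is precisely $\mathcal{H}_{\phi}$, and the range of an orthogonal projection is automatically closed.

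There is no real obstacle here; the only thing that needs a line of verification is $W_{\phi}^{\ast}W_{\phi}=I$, which is just a rephrasing of Moyal's identity. I would prefer the projection argument in the write-up since the operator $P_{\phi}=W_{\phi}W_{\phi}^{\ast}$ is likely to reappear as the orthogonal projector onto $\mathcal{H}_{\phi}$ in the intertwining construction announced in the introduction.
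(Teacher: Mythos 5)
Your second ("slicker") argument is precisely the paper's proof: the paper sets $P_{\phi}=W_{\phi}W_{\phi}^{\ast}$, checks $P_{\phi}^{\ast}=P_{\phi}$ and $P_{\phi}^{2}=P_{\phi}$ using $W_{\phi}^{\ast}W_{\phi}=I$, and concludes by noting that the range of an orthogonal projection is closed. Your first, sequence-based argument is an equally valid and more elementary alternative, but the one you say you prefer is exactly what the authors wrote.
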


\begin{proof}
Set $P_{\phi}=W_{\phi}W_{\phi}^{\ast}$ where $W_{\phi}^{\ast}$ is the adjoint
of $W_{\phi}$; we have $P_{\phi}=P_{\phi}^{\ast}$ and $P_{\phi}P_{\phi}^{\ast
}=P_{\phi}$ hence $P_{\phi}$ is an orthogonal projection. Since $W_{\phi
}^{\ast}W_{\phi}$ is the identity on $L^{2}(\mathbb{R}^{n})$ the range of
$W_{\phi}^{\ast}$ is $L^{2}(\mathbb{R}^{n})$ and that of $P_{\phi}$ is
therefore precisely $\mathcal{H}_{\phi}$. Since the range of a projection is
closed, so is $\mathcal{H}_{\phi}$.
\end{proof}

This result, together with formula (\ref{b}) shows that $\widetilde{T}(z_{0})$
and $\widehat{T}(z_{0})$ are unitarily equivalent representations of the
Heisenberg group $\mathbf{H}_{n}$; the irreducibility of the representation
$\widetilde{T}(z_{0}):\mathbf{H}_{n}\longrightarrow\mathcal{H}_{\phi}$ follows
from von Neumann's uniqueness theorem for the projective representations of
the CCR.

Let us return to the operator $\widetilde{H}=H\star$. A straightforward
calculation showing that $W_{\phi}$ satisfies the intertwining relations%
\begin{align*}
x\star W_{\phi}\psi &  =(x+\tfrac{1}{2}i\hbar\partial_{p})W_{\phi}\psi
=W_{\phi}(x\psi)\\
p\star W_{\phi}\psi &  =(p-\tfrac{1}{2}i\hbar\partial_{x})W_{\phi}\psi
=W_{\phi}(-i\hbar\partial_{x}\psi)
\end{align*}
an educated guess is then that more generally:

\begin{proposition}
(i) The operator $W_{\phi}$ intertwines the operators $\widetilde{T}(z_{0})$
and $\widehat{T}(z_{0})$:
\begin{equation}
W_{\phi}(\widehat{T}(z_{0})\psi)=\widetilde{T}(z_{0})W_{\phi}\psi;
\label{inter1}%
\end{equation}
(ii) We also have
\begin{equation}
\widetilde{H}W_{\phi}=W_{\phi}\widehat{H}\text{ \ and \ }W_{\phi}^{\ast
}\widetilde{H}=\widehat{H}W_{\phi}^{\ast}\text{.} \label{fund}%
\end{equation}

\end{proposition}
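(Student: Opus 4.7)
The plan is to prove (i) by direct computation and then use it together with the integral representations (\ref{psi1}) and (\ref{weyl}) to reduce (ii) to (i); the second identity in (ii) will follow by taking adjoints.

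For part (i), I would substitute the explicit formula
$\widehat{T}(z_{0})\psi(x)=e^{\frac{i}{\hbar}(p_{0}\cdot x-\frac{1}{2}p_{0}\cdot x_{0})}\psi(x-x_{0})$
into the defining integral (\ref{wifi}) of $W_{\phi}$ and then perform the change of variable $y\mapsto y+x_{0}$, so that the shifted first factor becomes $\psi(x-\tfrac{1}{2}x_{0}+\tfrac{1}{2}y)$ and the conjugate factor becomes $\overline{\phi(x-\tfrac{1}{2}x_{0}-\tfrac{1}{2}y)}$. These match exactly the integrand that computes $W_{\phi}\psi$ at the shifted base point $z-\tfrac{1}{2}z_{0}$. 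The whole point of the calculation is then to recognize that the accumulated phase, after collecting the original $e^{-\frac{i}{\hbar}p\cdot y}$, the exponential in $\widehat{T}(z_{0})$, and the Jacobian-free translation, simplifies to $e^{-\frac{i}{\hbar}\sigma(z,z_{0})}e^{-\frac{i}{\hbar}(p-p_{0}/2)\cdot y}$. Factoring out the $z$-dependent phase leaves precisely $\widetilde{T}(z_{0})W_{\phi}\psi(z)$ in view of (\ref{hwbis}).

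For the first half of (ii), I would start from formula (\ref{weyl}), apply $W_{\phi}$ on the left, and move it under the integral sign (legitimate for $H\in\mathcal{S}(\mathbb{R}^{2n})$ and $\psi\in\mathcal{S}(\mathbb{R}^{n})$). Part (i) then transforms the integrand $W_{\phi}(\widehat{T}(z_{0})\psi)$ into $\widetilde{T}(z_{0})W_{\phi}\psi$, and the result coincides with the right-hand side of (\ref{psi1}) evaluated at $\Psi=W_{\phi}\psi$, which is $\widetilde{H}W_{\phi}\psi$ by definition. This yields $\widetilde{H}W_{\phi}=W_{\phi}\widehat{H}$.

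For the second identity, I would take the Hilbert-space adjoint of the first, obtaining $W_{\phi}^{\ast}\widetilde{H}^{\ast}=\widehat{H}^{\ast}W_{\phi}^{\ast}$. Since the Weyl symbol of $\widehat{H}^{\ast}$ is $\overline{H}$, and one sees from kernel formula (\ref{k}) that the operator $\widetilde{\overline{H}}$ built from $\overline{H}$ coincides with $\widetilde{H}^{\ast}$, the adjoint relation is itself a special case of the first identity applied to the symbol $\overline{H}$, giving $W_{\phi}^{\ast}\widetilde{H}=\widehat{H}W_{\phi}^{\ast}$. The main delicate point is the bookkeeping of phase factors in the change of variable for part (i); once that is aligned, (ii) is essentially automatic from the integral representations and adjunction.
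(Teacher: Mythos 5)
Your proof is correct and follows essentially the same route as the paper's: the change of variable $y\mapsto y+x_{0}$ in the cross-Wigner integral for part (i), substituting the result into the representations (\ref{weyl}) and (\ref{psi1}) to obtain the first identity in (ii), and passing to adjoints for the second. Your treatment of the adjoint step is somewhat more explicit than the paper's one-line remark, since you spell out that $\widetilde{H}^{\ast}=\widetilde{\overline{H}}$ and that $\widehat{H}^{\ast}$ is the Weyl operator with symbol $\overline{H}$ before replacing $H$ by $\overline{H}$, but the underlying idea is the same.
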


\begin{proof}
Making the change of variable $y=y^{\prime}+x_{0}$ in the definition
(\ref{wifi}) of $W_{\phi}$ we get
\[
W_{\phi}(\widehat{T}(z_{0})\psi,\phi)(z)=e^{-\frac{i}{\hbar}\sigma(z,z_{0}%
)}W_{\phi}\psi(z-\tfrac{1}{2}z_{0})
\]
which is precisely (\ref{inter1}). Applying $W_{\phi}$ to both sides of
(\ref{weyl}), we get
\[
W_{\phi}\widehat{H}\psi=\left(  \tfrac{1}{2\pi\hbar}\right)  ^{n}%
\int_{\mathbb{R}^{2n}}H_{\sigma}^{\hbar}(z_{0})W_{\phi}[\widehat{T}(z_{0}%
)\psi]dz_{0}.
\]
and hence%
\[
W_{\phi}\widehat{H}\psi=\left(  \tfrac{1}{2\pi\hbar}\right)  ^{n}%
\int_{\mathbb{R}^{2n}}H_{\sigma}^{\hbar}(z_{0})[\widetilde{T}(z_{0})W_{\phi
}\psi]dz_{0},%
\]
which is the first equality (\ref{fund}) in view of formula (\ref{psi1}). To
prove the second equality it suffices to apply the first to $W_{\phi}^{\ast
}\widetilde{H}=(\widetilde{H}^{\ast}W_{\phi})^{\ast}$.
\end{proof}

\section{Spectral Results}

We will need the following result, which is quite interesting by itself:

\begin{lemma}
\label{Wong}Let $(\phi_{j})_{j}$ be an arbitrary orthonormal basis of
$L^{2}(\mathbb{R}^{n})$. Then the vectors $\Phi_{j,k}=W_{\phi_{j}}\phi_{k}$ form an
orthonormal basis of $L^{2}(\mathbb{R}^{2n})$.
\end{lemma}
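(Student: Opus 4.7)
The plan is to split the proof of the ONB property into the orthonormality statement and the completeness (totality) statement, both attacked through Moyal's identity and the adjoint formula (\ref{adj}).

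Orthonormality is essentially immediate from Moyal's identity and the definition (\ref{wpf}). Writing $\Phi_{j,k} = (2\pi\hbar)^{n/2} W(\phi_k,\phi_j)$ and plugging into the Moyal identity recalled just before Proposition~5 gives
\[
(\Phi_{j,k}\mid\Phi_{j',k'})_{L^2(\mathbb{R}^{2n})}
= (2\pi\hbar)^n \cdot (2\pi\hbar)^{-n}(\phi_k\mid\phi_{k'})(\phi_j\mid\phi_{j'})
= \delta_{jj'}\delta_{kk'},
\]
so there is no real content here beyond unwinding the normalization.

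For completeness I would argue by contraposition: assume $\Psi\in L^2(\mathbb{R}^{2n})$ satisfies $(\Psi\mid\Phi_{j,k})=0$ for all $j,k$, and show $\Psi=0$. Using the adjoint $W_{\phi_j}^{\ast}$ from (\ref{adj}) one has $(\Psi\mid W_{\phi_j}\phi_k)_{L^2(\mathbb{R}^{2n})} = (W_{\phi_j}^{\ast}\Psi\mid\phi_k)_{L^2(\mathbb{R}^n)}$. Fixing $j$ and letting $k$ range over the orthonormal basis $(\phi_k)_k$ of $L^2(\mathbb{R}^n)$ forces $W_{\phi_j}^{\ast}\Psi=0$, i.e.\ $\Psi\perp\mathcal{H}_{\phi_j}$ for every $j$. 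Since $W(\psi,\cdot)$ is antilinear in its second slot and $(\phi_j)_j$ is an ONB, this upgrades to
\[
(\Psi\mid W(\psi,\phi))_{L^2(\mathbb{R}^{2n})} = 0 \qquad \text{for all } \psi,\phi\in L^2(\mathbb{R}^n).
\]

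The one genuine obstacle is the last step, namely inferring $\Psi=0$ from the above. This needs the fact that the linear span of cross-Wigner distributions $\{W(\psi,\phi):\psi,\phi\in L^2(\mathbb{R}^n)\}$ is total in $L^2(\mathbb{R}^{2n})$. The cleanest way to see this, while staying entirely inside the framework already set up in the paper, is to notice that $(2\pi\hbar)^n W(\psi,\phi)$ is the Weyl (contravariant) symbol of the rank-one operator $\psi\otimes\overline{\phi}$ (read off from (\ref{axy})), and that Moyal's identity is precisely the Plancherel statement that the Weyl symbol map is a unitary (up to a constant) from the Hilbert--Schmidt class on $L^2(\mathbb{R}^n)$ onto $L^2(\mathbb{R}^{2n})$; since finite-rank operators are dense in Hilbert--Schmidt, their symbols are dense in $L^2(\mathbb{R}^{2n})$. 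A tidier variant of the endgame that avoids citing any density result is to prove Parseval directly: the same bookkeeping yields $\sum_{j,k} |(\Psi\mid\Phi_{j,k})|^2 = (2\pi\hbar)^n \|\widehat{A}_{\Psi}\|_{\mathrm{HS}}^2 = \|\Psi\|^2_{L^2(\mathbb{R}^{2n})}$, where $\widehat{A}_{\Psi}$ is the Weyl operator with symbol $\Psi$, and this combined with orthonormality finishes the proof in one stroke.
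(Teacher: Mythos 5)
Your proof is correct, and it parts ways with the paper's argument exactly at the point you yourself flag as the genuine obstacle. For orthonormality the two proofs are the same computation: the paper phrases it as ``the $W_{\phi_j}$ are isometries,'' but that isometry was established from Moyal's identity, which is what you unwind directly. For completeness, both proofs first reduce, via (\ref{adj}) and the basis property of $(\phi_k)_k$ and antilinearity in $\phi$, to showing that $W_{\phi}^{\ast}\Psi=0$ for all $\phi\in L^{2}(\mathbb{R}^{n})$ forces $\Psi=0$. The paper then writes out the Schwartz kernel of $W_{\phi}^{\ast}$, namely $\Phi_{x}(y,p)=(2/\pi\hbar)^{n/2}e^{\frac{2i}{\hbar}p\cdot(x-y)}\phi(2y-x)$, and argues that $\langle\Psi,\Phi_{x}\rangle=0$ for all $x$ and all $\phi$ forces $\Psi=0$ --- in substance a change-of-variables/partial-Fourier-transform observation carried out on the kernel, kept entirely inside the tools (\ref{adj}) already provides. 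You instead invoke the Weyl--Plancherel correspondence: $(2\pi\hbar)^{n}W(\psi,\phi)$ is the Weyl symbol of the rank-one operator $\psi\otimes\overline{\phi}$, the Weyl symbol map is a constant multiple of a unitary from the Hilbert--Schmidt class onto $L^{2}(\mathbb{R}^{2n})$, and finite-rank operators are HS-dense. Your Parseval variant is sharper still, since it bypasses the entire reduction to $W_{\phi}^{\ast}\Psi=0$: the sum $\sum_{j,k}|(\Psi|\Phi_{j,k})|^{2}$ is, up to the factor $(2\pi\hbar)^{n}$, just $\|\widehat{A}_{\Psi}\|_{\mathrm{HS}}^{2}=(2\pi\hbar)^{-n}\|\Psi\|^{2}$ and the lemma follows in one stroke. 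The trade-off is that this quotes the HS--$L^{2}$ unitarity of the Weyl transform, a standard fact (and in fact the route Wong \cite{Wong} takes, whom the paper cites in the next section) but one the paper does not state or prove; the paper's kernel argument, though terser and harder to make airtight, stays self-contained given what precedes it. Both are legitimate; yours is the more conceptual and, with the Parseval formulation, the shorter of the two.
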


\begin{proof}
Since the $W_{\phi_{j}}$ are isometries the vectors $\Phi_{j,k}$ form an
orthonormal system. It is sufficient to show that if $\Psi\in L^{2}%
(\mathbb{R}^{2n})$ is orthogonal to the family $(\Phi_{j,k})_{j,k}$ (and hence
to all the spaces $\mathcal{H}_{\phi_{j}}$) then $\Psi=0$. Assume that
$(\Psi|\Phi_{jk})_{L^{2}(\mathbb{R}^{2n})}=0$ for all $j,k.$ Since we have
\[
(\Psi|\Phi_{jk})_{L^{2}(\mathbb{R}^{2n})}=(\Psi|W_{\phi_{j}}\phi_{k}%
)_{L^{2}(\mathbb{R}^{2n})}=(W_{\phi_{j}}^{\ast}\Psi|\phi_{k})_{L^{2}%
(\mathbb{R}^{n})}%
\]
it follows that $W_{\phi_{j}}^{\ast}\Psi=0$ for all $j$ since $(\phi_{j})_{j}$
is a basis; using the anti-linearity of $W_{\phi}$ in $\phi$ we have in fact
$W_{\phi}^{\ast}\Psi=0$ for all $\phi\in L^{2}(\mathbb{R}^{n})$. Let us show
that this implies that $\Psi=0$. In view of formula (\ref{adj}) for the
adjoint of $W_{\phi}$ the operator $W_{\phi}^{\ast}$ has kernel%
\[
\Phi_{x}(y,p)=\left(  \tfrac{2}{\pi\hbar}\right)  ^{n/2}e^{\frac{2i}{\hbar
}p\cdot(x-y)}\phi(2y-x).
\]
Let us fix $x$; the property $W_{\phi}^{\ast}\Psi=0$ for all $\phi$ is then
equivalent to $\left\langle \Psi,\Phi_{x}\right\rangle =0$ for all $\Phi
_{x}\in\mathcal{S}(\mathbb{R}^{2n})$ (fixed $x$) and hence $\Psi=0$, which we
set out to show.
\end{proof}

We now have everything we need to prove the main results of this Letter. We
begin by stating the following general property:

\begin{theorem}
\label{eigen1}The following properties are true: (i) The eigenvalues of the
operators $\widehat{H}$ and $\widetilde{H}=H\star$ are the same; (ii) Let
$\psi$ be an eigenfunction of $\widehat{H}$: $\widehat{H}\psi=\lambda\psi$.
Then, for every $\phi$, the function $\Psi=W_{\phi}\psi$ is an eigenfunction
of $\widetilde{H}$ corresponding to the same eigenvalue: $\widetilde{H}%
\Psi=\lambda\Psi$. (iii) Conversely, if $\Psi$ is an eigenfunction of
$\widetilde{H}$ then $\psi=W_{\phi}^{\ast}\Psi$ is an eigenfunction of
$\widehat{H}$ corresponding to the same eigenvalue.
\end{theorem}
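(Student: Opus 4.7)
The plan is to read the result off directly from the intertwining identities of Proposition \ref{fund} (the unnumbered proposition establishing $\widetilde{H}W_{\phi}=W_{\phi}\widehat{H}$ and $W_{\phi}^{\ast}\widetilde{H}=\widehat{H}W_{\phi}^{\ast}$), using the isometry property of $W_{\phi}$ and Lemma \ref{Wong} to control nontriviality of the produced eigenfunctions. I would prove (ii) and (iii) first, and then deduce (i) as a corollary.

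For (ii), suppose $\widehat{H}\psi=\lambda\psi$ with $\psi\neq0$. Apply $W_{\phi}$ to both sides and use the first identity in (\ref{fund}) to obtain $\widetilde{H}(W_{\phi}\psi)=W_{\phi}(\widehat{H}\psi)=\lambda W_{\phi}\psi$, i.e.\ $\widetilde{H}\Psi=\lambda\Psi$ with $\Psi=W_{\phi}\psi$. Since $W_{\phi}$ is an isometry of $L^{2}(\mathbb{R}^{n})$ into $L^{2}(\mathbb{R}^{2n})$, $\Psi\neq0$, so it is a genuine eigenfunction.

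For (iii), suppose $\widetilde{H}\Psi=\lambda\Psi$ with $\Psi\neq0$. Applying $W_{\phi}^{\ast}$ and using the second identity in (\ref{fund}), $\widehat{H}(W_{\phi}^{\ast}\Psi)=W_{\phi}^{\ast}(\widetilde{H}\Psi)=\lambda W_{\phi}^{\ast}\Psi$. The only delicate point is that $W_{\phi}^{\ast}\Psi$ must be nonzero for \emph{some} admissible $\phi$; otherwise the statement would be vacuous. This is exactly where Lemma \ref{Wong} is used: if $W_{\phi}^{\ast}\Psi=0$ for every $\phi\in\mathcal{S}(\mathbb{R}^{n})$ with $\|\phi\|_{L^{2}}=1$, then choosing an orthonormal basis $(\phi_{j})$ we would have $(\Psi\,|\,W_{\phi_{j}}\phi_{k})_{L^{2}(\mathbb{R}^{2n})}=(W_{\phi_{j}}^{\ast}\Psi\,|\,\phi_{k})_{L^{2}(\mathbb{R}^{n})}=0$ for all $j,k$, forcing $\Psi=0$ by Lemma \ref{Wong}, a contradiction. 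Hence there exists $\phi$ with $\psi=W_{\phi}^{\ast}\Psi\neq0$.

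Finally (i) is immediate: by (ii) every eigenvalue of $\widehat{H}$ is an eigenvalue of $\widetilde{H}$, and by (iii) (using the $\phi$ selected via Lemma \ref{Wong}) every eigenvalue of $\widetilde{H}$ is an eigenvalue of $\widehat{H}$. The main obstacle is precisely the nontriviality issue in (iii); the algebraic part is a direct consequence of the intertwining relations, but without Lemma \ref{Wong} one could not rule out the possibility that all the candidate eigenfunctions $W_{\phi}^{\ast}\Psi$ vanish identically.
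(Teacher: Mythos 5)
Your proposal is correct and follows essentially the same route as the paper: both deduce (ii) from $\widetilde{H}W_{\phi}=W_{\phi}\widehat{H}$ and the injectivity of $W_{\phi}$, and both establish the nontriviality of $W_{\phi}^{\ast}\Psi$ in (iii) by invoking Lemma \ref{Wong} after assuming $W_{\phi}^{\ast}\Psi=0$ for all $\phi$ (the paper phrases this through the projection $P_{\phi}=W_{\phi}W_{\phi}^{\ast}$, you unpack the inner products $(\Psi\,|\,W_{\phi_{j}}\phi_{k})$ directly, which is merely a cosmetic difference). You are also right to flag that the conclusion of (iii) is "$W_{\phi}^{\ast}\Psi\neq 0$ for \emph{some} $\phi$" and not for every $\phi$ (indeed $W_{\phi}^{\ast}W_{\phi_{0}}\psi_{0}$ vanishes whenever $\phi\perp\phi_{0}$), which the paper's wording leaves slightly ambiguous.
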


\begin{proof}
That every eigenvalue of $\widehat{H}$ also is an eigenvalue of $\widetilde
{H}$ is clear: if $\widehat{H}\psi=\lambda\psi$ for some $\psi\neq0$ then
\[
\widetilde{H}(W_{\phi}\psi)=W_{\phi}\widehat{H}\psi=\lambda(W_{\phi}\psi)
\]
and $W_{\phi}\psi\neq0$ because $W_{\phi}$ is injective; this proves at the
same time that $W_{\phi}\psi$ is an eigenfunction of $\widetilde{H}$. Assume
conversely that $\widetilde{H}\Psi=\lambda\Psi$ for $\Psi\neq0$ and
$\lambda\in\mathbb{R}$. For every $\phi$ we have, using the second equality
(\ref{fund}),
\[
\widehat{H}W_{\phi}^{\ast}\Psi=W_{\phi}^{\ast}\widetilde{H}\Psi=\lambda
W_{\phi}^{\ast}\Psi
\]
hence $\lambda$ is an eigenvalue of $\widehat{H}$; $W_{\phi}^{\ast}\Psi$ is an
an eigenfunction if it is different from zero. Let us prove this is indeed the
case. We have $W_{\phi}W_{\phi}^{\ast}\Psi=P_{\phi}\Psi$ where $P_{\phi}$ is
the orthogonal projection on the range $\mathcal{H}_{\phi}$ of $W_{\phi}$.
Assume that $W_{\phi}^{\ast}\Psi=0$; then $P_{\phi}\Psi=0$ for every $\phi
\in\mathcal{S}(\mathbb{R}^{n})$, and hence $\Psi=0$ in view of Lemma
\ref{Wong} above.
\end{proof}

\begin{remark}
The result above is indeed quite general, because we do not make any
assumption on the multiplicity of the (star)eigenvalues, nor do we assume that
$\widehat{H}$ is essentially self-adjoint. Notice that the proof actually
works for arbitrary $\phi\in\mathcal{S}^{\prime}(\mathbb{R}^{n})$. We present some 
examples at the end of this section.
\end{remark}

\begin{corollary}
\label{kernel}Suppose that $\widehat{H}$ is an essentially self-adjoint
operator on $L^{2}(\mathbb{R}^{n})$ and that each of the eigenvalues
$\lambda_{0},\lambda_{1},...,\lambda_{j},...$ has multiplicity one. Let
$\psi_{0},\psi_{1},...,\psi_{j},...$ be a corresponding sequence of
orthonormal eigenfunctions. Let $\Psi_{j}$ be an eigenfunction of
$\widetilde{H}$ corresponding to the eigenvalue $\lambda_{j}$. Then there exists a
sequence $(\alpha_{j,k})_{k}$ of complex numbers such that
\begin{equation}
\Psi_{j}=\sum_{\ell}\alpha_{j,\ell}\Psi_{j,\ell}\text{ \ with \ }\Psi_{j,\ell
}=W_{\psi_{\ell}}\psi_{j}\in\mathcal{H}_{j}\cap\mathcal{H}_{\ell}\text{.}
\label{fifi}%
\end{equation}

\end{corollary}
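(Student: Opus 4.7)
The plan is to exploit Lemma~\ref{Wong} to decompose $L^2(\mathbb{R}^{2n})$ as an orthogonal direct sum of the subspaces $\mathcal{H}_{\psi_\ell}$, and then to use the intertwining relations (\ref{fund}) to identify $\widetilde{H}$-eigenfunctions with $\widehat{H}$-eigenfunctions inside each summand, invoking the simple-multiplicity hypothesis to pin down the shape of $\Psi_j$.

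First, I observe that the orthonormal basis $\{W_{\psi_\ell}\psi_m\}_{\ell,m}$ produced by Lemma~\ref{Wong} splits into the orthonormal bases $\{W_{\psi_\ell}\psi_m\}_m$ of the closed subspaces $\mathcal{H}_{\psi_\ell}$ (since $W_{\psi_\ell}$ is an isometry onto $\mathcal{H}_{\psi_\ell}$). Orthonormality across different outer indices then forces $\mathcal{H}_{\psi_\ell}\perp\mathcal{H}_{\psi_{\ell'}}$ whenever $\ell\neq\ell'$, so that $L^2(\mathbb{R}^{2n})=\bigoplus_\ell\mathcal{H}_{\psi_\ell}$ and the orthogonal projections $P_{\psi_\ell}=W_{\psi_\ell}W_{\psi_\ell}^{\ast}$ satisfy $\sum_\ell P_{\psi_\ell}=I$ in the strong sense.

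Next, I apply the second identity in (\ref{fund}) with $\phi=\psi_\ell$ to the eigenfunction $\Psi_j$, obtaining $\widehat{H}\bigl(W_{\psi_\ell}^{\ast}\Psi_j\bigr)=W_{\psi_\ell}^{\ast}\widetilde{H}\Psi_j=\lambda_j\,W_{\psi_\ell}^{\ast}\Psi_j$. Since $\lambda_j$ is a simple eigenvalue of $\widehat{H}$ with eigenvector $\psi_j$, there exists a scalar $\alpha_{j,\ell}$ (possibly zero, if $\Psi_j$ happens to be orthogonal to $\mathcal{H}_{\psi_\ell}$) such that $W_{\psi_\ell}^{\ast}\Psi_j=\alpha_{j,\ell}\,\psi_j$. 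Applying $W_{\psi_\ell}$ yields $P_{\psi_\ell}\Psi_j=\alpha_{j,\ell}\,W_{\psi_\ell}\psi_j=\alpha_{j,\ell}\,\Psi_{j,\ell}$, and summing the resolution of identity established in the first step gives $\Psi_j=\sum_\ell\alpha_{j,\ell}\,\Psi_{j,\ell}$ with convergence in $L^2(\mathbb{R}^{2n})$. The membership $\Psi_{j,\ell}\in\mathcal{H}_j\cap\mathcal{H}_\ell$ is read off directly: $\Psi_{j,\ell}\in\mathcal{H}_{\psi_\ell}$ by construction, while Theorem~\ref{eigen1}(ii) places it in the $\lambda_j$-eigenspace of $\widetilde{H}$.

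The main subtlety I expect, and the reason to route the argument through the adjoint $W_{\psi_\ell}^{\ast}$ rather than by expanding $\Psi_j$ in the basis of Lemma~\ref{Wong} and applying $\widetilde{H}$ term by term, is that we have not established closability or a functional calculus for $\widetilde{H}$ on $L^2(\mathbb{R}^{2n})$. Moving to the $\widehat{H}$-picture via the bounded map $W_{\psi_\ell}^{\ast}$ sidesteps this issue entirely, because the spectral information is transferred to the essentially self-adjoint operator $\widehat{H}$, for which simple multiplicity is an unambiguous statement.
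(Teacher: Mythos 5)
Your proof is correct, and it takes a genuinely different route from the paper's. The paper also invokes Lemma~\ref{Wong}, but it proceeds by writing the full double expansion $\Psi_j=\sum_{k,\ell}\alpha_{j,k,\ell}\Psi_{k,\ell}$, applying $\widetilde{H}$ term by term to the infinite series, and matching coefficients against $\lambda_j\Psi_j$ to force $\alpha_{j,k,\ell}=0$ for $k\neq j$. That interchange of $\widetilde{H}$ with the $L^2$-limit is exactly the step you flagged: it tacitly needs closedness or some continuity of $\widetilde{H}$ on $L^2(\mathbb{R}^{2n})$, which the paper never establishes. Your argument never applies $\widetilde{H}$ to a series at all: you act with the bounded adjoint $W_{\psi_\ell}^{\ast}$ on the single vector $\Psi_j$, use the intertwining relation (\ref{fund}) once so that $\widetilde{H}$ hits only $\Psi_j$ itself (where the $\star$-genvalue equation is given), and let simple multiplicity of $\lambda_j$ for $\widehat{H}$ pin $W_{\psi_\ell}^{\ast}\Psi_j$ to the line $\mathbb{C}\psi_j$; the rest is carried by the bounded maps $W_{\psi_\ell}$ and the orthogonal projections $P_{\psi_\ell}$, summed over $\ell$ via $L^2(\mathbb{R}^{2n})=\bigoplus_\ell\mathcal{H}_{\psi_\ell}$. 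Grouping the paper's double sum by $\ell$ amounts to the same decomposition, so the skeletons agree, but you compute $P_{\psi_\ell}\Psi_j=W_{\psi_\ell}W_{\psi_\ell}^{\ast}\Psi_j$ directly rather than by an unjustified term-by-term identification, and you put the second half of (\ref{fund}) (which Theorem~\ref{eigen1}(iii) proves but the paper's corollary proof does not use) to work. Your version is the tighter one. One side remark: your reading of $\mathcal{H}_j$ in (\ref{fifi}) as the $\lambda_j$-eigenspace of $\widetilde{H}$ is the only tenable one; under the paper's earlier convention $\mathcal{H}_j=$ range of $W_{\psi_j}$, Moyal orthogonality gives $\mathcal{H}_{\psi_j}\perp\mathcal{H}_{\psi_\ell}$ for $j\neq\ell$, so the stated intersection would be $\{0\}$ and the membership claim would fail for $j\neq\ell$ --- the paper's parenthetical appeal to sesquilinearity there appears to be a slip rather than something to reproduce.
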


\begin{proof}
We know from Theorem \ref{eigen1} above that $\widehat{H}$ and $\widetilde{H}$
have same eigenvalues and that $\Psi_{j,k}=W_{\psi_{k}}\psi_{j}$ satisfies the
eigenvalue equation $\widetilde{H}\Psi_{j,k}=\lambda_{j}\Psi_{j,k}$. Since
$\widehat{H}$ is self-adjoint and its eigenvalues are distinct, its
eigenfunctions $\psi_{j}$ form an orthonormal basis of $L^{2}(\mathbb{R}^{n}%
)$; it follows from Lemma \ref{Wong} that the $\Psi_{j,k}$ form an orthonormal
basis of $L^{2}(\mathbb{R}^{2n})$, hence there exist non-zero scalars
$\alpha_{j,k,\ell}$ such that $\Psi_{j}=\sum_{k,\ell}\alpha_{j,k,\ell}%
\Psi_{k,\ell}$. We have, by linearity and using the fact that $\widetilde
{H}\Psi_{k,\ell}=\lambda_{k}\Psi_{k,\ell}$,
\[
\widetilde{H}\Psi_{j}=\sum_{k,\ell}\alpha_{j,k,\ell}\widetilde{H}\Psi_{k,\ell
}=\sum_{k,\ell}\alpha_{j,k,\ell}\lambda_{k}\Psi_{k,\ell}.
\]
On the other hand we also have $\widetilde{H}\Psi_{j}=\lambda_{j}\Psi_{j}$,
\[
\widetilde{H}\Psi_{j}=\lambda_{j}\Psi_{j}=\sum_{j,k}\alpha_{j,k,\ell}%
\lambda_{j}\Psi_{k,\ell}%
\]
which is only possible if $\alpha_{j,k,\ell}=0$ for $k\neq j$; setting
$\alpha_{j,\ell}=\alpha_{j,j,\ell}$ formula(\ref{fifi}) follows. (That
$\Psi_{j,\ell}\in\mathcal{H}_{j}\cap\mathcal{H}_{\ell}$ is clear using the
definition of $\mathcal{H}_{\ell}$ and the sesquilinearity of the cross-Wigner transform.)
\end{proof}

We remark that the continuous spectrum can be dealt with in a similar fashion
provided that one generalizes the transform $W_{\phi}$ by allowing the
\textquotedblleft parameter\textquotedblright\ to be a tempered distribution
(in which case the normalization condition $||\phi||_{L^{2}}=1$ does no longer
make sense, of course); the same remark applies to the case where $\widehat
{H}$ is no longer essentially self-adjoint (cf. the remark following the proof
of Theorem \ref{eigen1}). To illustrate this, let us consider the two
following typical examples (in dimension $n=1$):

\begin{itemize}
\item $H(x,p)=p.$ In this case $\widehat{H}=-i\hbar\partial/\partial x$ is a
symmetric operator and the equation $\widehat{H}\psi=E\psi$ has solutions for
every real value of $E$; these solutions are the tempered distributions
$\psi(x)=C\exp(iEx/\hbar)$ ($C$ any complex constant). A straightforward
calculation shows that%
\[
W_{\phi}\psi(x,p)=C^{\prime}e^{\frac{2i}{\hbar}(E-p)x}\overline{F\phi(p)}%
\]
where $C^{\prime}$ is a new constant and $F\phi$ is the Fourier transform of
$\phi$. If we let $\phi$ range over $\mathcal{S}^{\prime}(\mathbb{R}^{n})$ and
use the fact that the Fourier transform is an automorphism of $\mathcal{S}%
^{\prime}(\mathbb{R}^{n})$ we see that $W_{\phi}\psi$ can be any distribution
of the type
\[
\Psi(x,p)=\Phi(p)e^{\frac{2i}{\hbar}(E-p)x}%
\]
with $\Phi\in\mathcal{S}^{\prime}(\mathbb{R}^{n})$; these distributions are
precisely the solutions of the stargenvalue equation
\[
p\star\Psi=(p-\tfrac{1}{2}i\hbar\partial_{x})\Psi=E\Psi
\]
as a straightforward calculation shows.

\item $H(x,p)=x.$ Here $\widehat{H}$ is the operator of multiplication by $x$;
this a symmetric operator without any eigenvalues and eigenfunctions. It is
however self-adjoint, and the solutions of $\widehat{H}\psi=E\psi$ are the
distributions $\psi=C\delta(x-E)$; one finds by an argument similar to that
above that $W_{\phi}\psi$ can be any distribution of the type
\[
\Psi(x,p)=\Phi(x)e^{-\frac{2i}{\hbar}(E-x)p}%
\]
which is the general solution of the stargenvalue equation%
\[
x\star\Psi=(x+\tfrac{1}{2}i\hbar\partial_{p})\Psi=E\Psi.
\]

\end{itemize}
The previous treatment of the stargenvlue equation for operators with a continuous spectrum 
can be made rigorous in the setting of Gelfand triples. In our setting $(\mathcal{S}(\mathbb{R}^n),L^2(\mathbb{R}^n),\mathcal{S}^\prime(\mathbb{R}^n))$ is the Gelfand triple of interest and the corresponding weak formulation of the eigenvalues and eigenvectors of an operator from $\mathcal{S}^\prime(\mathbb{R}^n)$ to $\mathcal{S}(\mathbb{R}^n)$. For further information on Gelfand triples we refer the reader to the standard reference \cite{gesh68}. 
\section{An Example and its Extension}

As an illustration consider the harmonic oscillator Hamiltonian%
\begin{equation}
H=\frac{1}{2}(p^{2}+x^{2}). \label{h1}%
\end{equation}
In view of the results above the spectra of the operators $\widehat{H}$ and
$\widetilde{H}$ are identical. Choosing for simplicity $\hbar=1$ the
eigenvalues of $\widehat{H}$ are the numbers $\lambda_{N}=N+\frac{1}{2}$ with
$N=0,1,2,...$. The normalized eigenfunctions are the rescaled Hermite
functions%
\begin{equation}
\psi_{k}(x)=(2^{k}k!\sqrt{\pi})^{-\frac{1}{2}}e^{-\frac{1}{2}x^{2}}%
\mathcal{H}_{k}(x). \label{h2}%
\end{equation}
where
\[
\mathcal{H}_{k}(x)=(-1)^{km}e^{x^{2}}\left(  \tfrac{d}{dx}\right)
^{k}e^{-x^{2}}%
\]
is the $k$-th Hermite polynomial. Using definition (\ref{wpf}) of $W_{\phi}$
together with known formulae for the cross-Wigner transform of Hermite
functions (see for instance Wong \cite{Wong}, Chapter 24, Theorem 24.1) one
finds that the eigenfunctions of $\widetilde{H}$ are linear superpositions of
the functions%
\begin{equation}
\Psi_{j+k,k}(z)=(-1)^{j}\left(  \tfrac{j!}{(j+k)!}\right)  ^{\frac{1}{2}%
}2^{\frac{k}{2}+1}\overline{\zeta}^{k}\mathcal{L}_{j}^{k}(2|z|^{2}%
)e^{-|\zeta|^{2}} \label{h3}%
\end{equation}
where $\zeta=x+ip$ and $\Psi_{j,j+k}=\overline{\Psi_{j+k,k}}$ for
$k=0,1,2,...$; here%
\[
\mathcal{L}_{j}^{k}(x)=\tfrac{1}{j!}x^{-k}e^{x}\left(  \tfrac{d}{dx}\right)
^{j}(e^{-x}x^{j+k})\text{ , }x>0
\]
is the Laguerre polynomial of degree $j\ $and order $k$. (For similar results
see Bayen et al. \cite{BFFLS2}.)

Notice that the example above can be generalized without difficulty to the
case of arbitrary quadratic Hamiltonians of the type
\[
H=\frac{1}{2}Mz\cdot z
\]
where $M$ is a positive-definite symmetric matrix. In fact, in view of
Williamson's diagonalization theorem there exists a symplectic matrix $S$ such
that%
\[
M=S^{T}DS\text{ \ , \ }D=%
\begin{pmatrix}
\Lambda & 0\\
0 & \Lambda
\end{pmatrix}
\]
where $\Lambda$ is the diagonal matrix whose entries are the moduli
$\omega_{j}>0$ of the eigenvalues $\pm i\omega_{j}$ of $JM$. We thus have%
\[
H\circ S=\sum_{j=1}^{n}\frac{\omega_{j}}{2}(x_{j}^{2}+p_{j}^{2})
\]
and $\widehat{H\circ S}=\widehat{S}\widehat{H}\widehat{S}^{-1}$ where
$\widehat{S}$ is anyone of the two metaplectic operators associated with $S$.
The eigenvalues of $\widehat{H\circ S}$ and $\widehat{H}$ are the same; they
are the numbers
\[
\lambda_{N_{1},...,N_{n}}=\sum_{j=1}^{n}\left(  N_{j}+\tfrac{1}{2}\right)
\omega_{j}%
\]
and then the eigenfunctions $\psi_{S}$ of $\widehat{H\circ S}$ and those,
$\psi$, of $\widehat{H}$ by the formula $\psi_{S}=\widehat{S}\psi$. Now, the
eigenfunctions of $\widehat{H}$ are tensor products of rescaled Hermite
functions; using the fact that $\psi_{S}=\widehat{S}\psi$ together with the
symplectic covariance formula
\[
W(\widehat{S}\psi,\widehat{S}\phi)(z)=W(\psi,\phi)(S^{-1}z)
\]
satisfied by the cross-Wigner distributions, the eigenvalues of $\widetilde
{H}=H\star$ are calculated in terms of tensor products of the functions
(\ref{h3}). We do not give the details of the calculations here since they are
rather lengthy but straightforward.

\section{Concluding Remarks}

Due to limitation of length there are several aspects of our approach we have
not discussed in this Letter. For instance, he methods developed here should
apply with a few modifications (but in a rather straightforward way) to more
general phase space (for instance co-adjoint orbits). A perhaps even more
exciting problem is the following, which is closely related to our previous
results \cite{golulett} on the relationship between the uncertainty principle
and the topological notion of symplectic capacity. A rather straightforward
extension of the methods we used in \cite{golulett} shows that if
\begin{equation}
|W_{\phi}\psi(z)|\leq Ce^{-\frac{1}{\hbar}(a|x|^{2}+b|y|^{2})}\text{ \ for
}z\in\mathbb{R}^{2n}%
\end{equation}
then we must have $ab\leq1$. In\ particular we can have $|W_{\phi}\psi(z)|\leq
Ce^{-\frac{1}{\varepsilon}|z|^{2}}$ only if $\varepsilon\geq\hbar$; it follows
that the Hilbert spaces $\mathcal{H}_{\phi}$ do not contain any nontrivial
function with compact support: assume in fact that $\Psi\in\mathcal{H}_{\phi}$
is such that $\Psi(z)=0$ for $|z|\geq R>0$. Then, given an arbitrary
$\varepsilon<\hbar$ one can find a constant $C_{\varepsilon}$ such that
$|\Psi(z)|\leq C_{\varepsilon}e^{-\frac{1}{\varepsilon}|z|^{2}}$, which is
impossible since $\Psi=W_{\phi}\psi$ for some $\psi\in L^{2}(\mathbb{R}^{n})$.
This suggests (taking Theorem \ref{eigen1} into account) that the solutions
$\Psi$ of the $\star$-genvalue equation cannot be too concentrated around a
point in phase-space. In fact we conjecture that if an estimate of the type
$|\Psi(z)|\leq Ce^{-\frac{1}{\hbar}Mz\cdot z}$ ($M$ symmetric
positive-definite) holds for an eigenfunction of the stargenvalue equation, then the symplectic capacity of the ellipsoid $Mz\cdot
z\leq\hbar$ must be at least $\frac{1}{2}h$. We will come back to this topic
in a near future.

\begin{acknowledgement}
The authors would like to express their deep gratitude to Prof. Daniel
Sternheimer for useful comments and valuable suggestions on an earlier draft
of this paper.
\end{acknowledgement}

\end{document}